\newcommand{\KLSHP}{KL-SHP\xspace}
\newcommand{\SHPI}{SHP-I\xspace}
\newcommand{\SHPII}{SHP-II\xspace}
\newcommand{\xhdr}[1]{\vspace{1.0mm}\noindent{{\bf #1.}}\hspace{0.5mm}}
\newcommand{\E}{\mathbb{E}}
\renewcommand\footnotetextcopyrightpermission[1]{}
\begin{document}

\title{Prioritized Restreaming Algorithms \\ for Balanced Graph Partitioning}

\author{Amel Awadelkarim}
  \affiliation{\institution{Stanford University}  \city{Stanford} \state{CA}}
  \email{ameloa@stanford.edu}
\author{Johan Ugander}
  \affiliation{\institution{Stanford University}  \city{Stanford} \state{CA}}
  \email{jugander@stanford.edu}

\begin{abstract}
Balanced graph partitioning is a critical step for many large-scale distributed computations with relational data. As graph datasets have grown in size and density, a range of highly-scalable balanced partitioning algorithms have appeared to meet varied demands across different domains. As the starting point for the present work, we observe that two recently introduced families of iterative partitioners---those based on restreaming and those based on balanced label propagation (including Facebook's Social Hash Partitioner)---can be viewed through a common modular framework of design decisions.
With the help of this modular perspective, 
we find that a key combination of design decisions
leads to a novel family of algorithms with notably better empirical performance than any existing highly-scalable algorithm on a broad range of real-world graphs. The resulting {\it prioritized restreaming algorithms} employ a constraint management strategy based on multiplicative weights, borrowed from the restreaming literature, while adopting notions of priority from balanced label propagation to optimize the ordering of the streaming process. Our experimental results consider a range of stream orders, where a dynamic ordering based on what we call {\it ambivalence} is broadly the most performative in terms of the cut quality of the resulting balanced partitions, with a static ordering based on degree being nearly as good.
\end{abstract}

\maketitle

\section{Introduction}
Graphs are ubiquitous structures in computer science for representing a host of real-world systems, including social and information networks, biological networks, and meshed domains in physics simulations. The scale of such systems of interest continue to grow, particularly in domains connected to online social data. The modern World Wide Web hosts tens of billions of webpages (nodes) with trillions of links (edges) between them. Facebook serves billions of monthly active users, plus hundreds of millions of pages, events, and groups, all interacting with each other through network structures.
Similarly, Twitter sees hundreds of millions of monthly active users interact by sharing and liking each others content. 
In all these examples, graph-wide computations---most notably in the service of ranking and recommendation problems---are central to the core functions of many products and services.

Unfortunately, 
large-scale computations are expensive; 
these graphs account for terabytes of compressed data~\cite{Stanton2012} and most computations over such datasets are intractable for a single machine to perform. The typical solution to this problem involves partitioning the input graph across a number of machines and using parallel algorithms for these computations, thereby increasing computational efficiency in terms of both network latency and runtime~\cite{Buluc13}. 

The question becomes how do we ``best'' partition the network to achieve these performance gains? The answer to this question is often highly context-specific. Indeed, some problems are best distributed by partitioning the node set, while others are best distributed by partitioning the edge set~\cite{gonzalez2012powergraph}. 
In this work we focus on applications motivated by partitioning the node set (without replication), and approaches to efficiently partitioning the node sets of large empirical graphs, a difficult task~\cite{leskovec2009community}. As further motivation for our work, balanced node set partitioning has recently been used in causal inference to improve the design of experiments in networked settings through a procedure dubbed graph cluster randomization \cite{ugander2013graph,saveski2017detecting}; this area of work specifically motivates the search for good $k$-way balanced partitions for very large $k$.

A common approach to node set partitioning is a simple hashing of the node set, effectively distributing nodes uniformly at random across clusters (or machines)~\cite{Sarwat2012, Shao2013, Malewicz2010}. But more intelligent approaches to partitioning can greatly improve the runtime of these distributed algorithms~\cite{UB2013, Stanton2012}. One of the important requirements of this partitioning task, compared to generic graph clustering tasks, is that we seek to balance the computational load associated with each cluster
of the partitioning. In this work we will focus on contexts where the computational load is constant per node, but the algorithms we consider and introduce can all be easily modified to account for non-uniform/weighted loads~\cite{NU2013}.

Enter the problem of interest, \textit{balanced graph partitioning}: given an input graph, how can we partition the node set to (1) maintain balanced loads across $k$ clusters, or \textit{shards}, while (2) minimizing some objective function. We focus on the \textit{edge-cut} objective~\cite{Buluc13}---minimizing the number of edges which span multiple shards---as it closely aligns with the literature. We recognize that minimizing edge-cut may not fully depict workload performance in practice~\cite{Pacaci2019}, but we use this objective as a proxy and to catalog the effect of various design decisions on this outcome. Other examples of objective functions include \textit{fanout} minimization for hypergraphs~\cite{Kabiljo2017} and the variance minimization in graph cluster randomization~\cite{ugander2013graph}.

Unfortunately, finding an exact solution to the edge-cut problem is infeasible for even modest graphs: when the number of shards is two, this problem equates to the minimum bisection problem, which is classically NP-hard~\cite{Andreev06} and for which there are no known efficient algorithms with good approximation guarantees. That said, there is a large body of work on practical, albeit heuristic, algorithms that perform well empirically successful across a range of relevant large-scale graph datasets. 

Recent work on scalable practical algorithms for graph partitioning has been driven largely by research at companies that manage some of the world's largest relational datasets~\cite{UB2013, Kabiljo2017, Martella2017, Stanton2012, Tsourakakis14, aydin2019distributed, bateni2017affinity, Duong2013}. 
In this work, we build a common framework around three such recent algorithms that are generally regarded as at or near the state-of-the-art for different objectives: Balanced Label Propagation (BLP)~\cite{UB2013}, Restreamed Linear Deterministic Greedy (reLDG)~\cite{NU2013}, and Social Hash partitioner (SHP)~\cite{Kabiljo2017}. 
In our experimental evaluations we also benchmark against a recent high-performance algorithm based on linear embeddings~\cite{aydin2019distributed}, an approach that is not obviously related to these other approaches.

BLP and SHP belong to a family of algorithms based on label propagation. Starting from an initial assignment they iteratively conduct node relocations to achieve higher quality partitions. ReLDG is an example of what are called restreaming algorithms~\cite{NU2013}, processing the node set serially in repeated passes, with each node placed according to an assignment rule designed to achieve balance. Streaming algorithms are commonly motivated by a highly restricted computational framework where one is attempting to make node assignments while the graph is in transit, being moved and/or loaded (during ETL, in the language of data warehousing). As such, the only \textit{stream orderings} of the node set tested prior to this work were random, breadth-first-search (BFS), and depth-first-seach (DFS) to mimic the order obtained by a web-crawler or equivalent process \cite{Stanton2012}. Our work is thus the first to (1) benchmark the latter against scalable non-streaming algorithms, and (2) explore strategic stream orderings, the order in which the node set is considered by the algorithm. We call these algorithms \textit{prioritized restreaming algorithms} for balanced graph partitioning.

\xhdr{Our contribution}
The contribution of this work can be summarized in three points:
\begin{enumerate}
\item We provide benchmarking that has been absent from the literature, showing that the existing restreaming algorithm reLDG outperforms BLP and SHP\footnote{Our implementation of SHP has been adapted to minimize the edge-cut objective, rather than fanout. See Section~\ref{methods}.} on a range of real-world graphs. 
\item We modularize the three algorithms in our discussion
and notice that they are in fact three different combinations of design decisions within a common framework in terms of how they manage constraints, node priority, and a concept we call incumbency.

\item We introduce both static and dynamic stream orderings, where the latter can vary between stream iterations, as a way to inject priority into streaming algorithms for balanced graph partitioning. In particular, one such dynamic ordering, \textit{ambivalence} ordering, produces the best or nearly best results in all test cases, followed closely by a static degree ordering.
\end{enumerate}
By illustrating how these existing algorithms can be viewed under the same framework, we highlight potential improvements in each. While not all of these directions lead to improvements (we document several failed attempts at improvement),
the (dynamic) stream ordering contribution stands out as a significant advancement of the state-of-the-art. Our results are supplemented with extensive empirical investigations of the role of various design decisions, presented in Section \ref{dissect}, in these algorithms.

\xhdr{Paper structure} Section~\ref{def} formally defines the problem of interest and sets up the notation used in the remainder of this work. In Section~\ref{methods}, the three aforementioned iterative techniques---BLP, SHP, and reLDG---are presented as they exist in the literature. We introduce a decomposition of the algorithms into their modular components in Section~\ref{dissect}, laying out the taxonomy we will refer to for the remainder of the work. In Section~\ref{orders} we discuss stream orders, and introduce a novel stream order inspired by the other non-streaming methods. Section~\ref{results} studies empirical evaluations of the algorithms on a variety of graphs. Finally, Section \ref{conc} concludes and summarizes our main findings.

\subsection{Related Work}
\label{related}
Graph partitioning and its balanced variation are well-studied problems, with major results dating back to at least 1970. Many classes of algorithms for balanced graph partitioning were omitted from this work, primarily because of their poor scaling properties when considering truly massive graphs, though we highlight some notable algorithms in this section. Borrowing nomenclature from~\cite{Buluc13}, the class of ``global'' balanced partitioners considers the entire graph in some capacity and strives to achieve a solution to adjacent problems with some version of theoretical guarantees, e.g.~spectral partitioning or max-flow/min-cut-based algorithms~\cite{Brunetta1997, Arm08} for bipartitioning. 
Given a bipartitioning algorithm, one can achieve a $k$-way partition by recursively cutting the graph $\log_2 k$ times. The earliest iterative algorithms for $k$-way partitioning were based on recursive schemes for bisection~\cite{Kernighan1970, Fiduccia1982}. However, these methods are less than ideal in our context for a few reasons: (1) spectral algorithms become impractical to compute for extremely large graphs, and in this work we focus on the frontier of truly massive graphs and (2) recursive bisection greatly restricts the $k$-way partition. Hence, we focus our work on direct $k$-way partitioning algorithms.

Another class of algorithms are ``coarsening'' or ``multi-level'' algorithms, which are comprised of coarsening, partitioning, uncoarsening, and refinement phases~\cite{Osipov10, Chevalier09, Meyer06}. These methods strive to harness the theoretical benefits of the previously mentioned techniques, but on smaller contracted graphs. METIS~\cite{Karypis1998fast, Karypis98}, a family of partitioning algorithms, is an example of a multilevel method, and currently represents the state-of-the-art in partition quality (for the edge-cut objective). As such, we present these results in our experiments in Section \ref{results}.

However, though METIS has a multi-threaded implementation \cite{Lasalle2013}, these methods generally require significant resources in terms of memory and time~\cite{Pacaci2019}, so we focus our attention on the ``local-improvement'' or ``iterative'' class of algorithms.
This class makes adjustments to feasible partitionings using only information at the local level for each node. BLP, SHP, and reLDG all fall into this class.
Other examples include the classic Kernighan--Lin~\cite{Kernighan1970} heuristic and its descendants~\cite{Fiduccia1982, UB2013, Kabiljo2017}, other streaming algorithms~\cite{Stanton2012, Stanton2014, NU2013, Tsourakakis14}, max-flow-based local improvements~\cite{sanders2011engineering}, and diffusion-based methods, which are primarily used for clustering with a few extensions to partitioning~\cite{Meyer09, Pellegrini07}. This class is attractive to researchers and engineers for their speed, ease of implementation, and relatively intuitive nature. 

\section{Problem Definition}
\label{def}
In this work we study iterative algorithms for solving the balanced $k$-way partitioning problem: given an undirected graph $G=(V,E)$ on $|V| = n$ nodes and $|E|=m$ edges, an integer $k$, and an imbalance parameter $\epsilon$, find a \textit{partitioning} $P = \{V_1,\hdots, V_k\}$ of the node set into $k$ disjoint \textit{shards} $V_i$ such that $\left\lceil (1-\epsilon) \frac{n}{k}\right\rceil\leq|V_i| \leq \left\lceil (1+\epsilon) \frac{n}{k}\right\rceil$ for all $i$, and the number of cross-shard edges is minimized. Formally, the edge-cut objective looks to minimize the size of the cut set of partition $P$, 
\[
C(P) = \{(u,v) \in E ~|~ P(u) \neq P(v)\},
\]
where $P:V\rightarrow [k]$ is the shard map, mapping nodes to their shard assignment under partition $P$. As additional notation, let $N(u)$ be the neighbor set of node $u$, $N(u) = \{v\in V ~|~ (u,v)\in E\}$. In Section \ref{results}, we report our results in terms of cut quality, or \textit{internal edge fraction}, which is defined as 1 minus the cut-fraction, $1-|C(P)|/m$. Lastly, note that while we assume that $G$ is an unweighted graph, all our techniques generalize easily to weighted graphs, where balance is defined in terms of total node weight and the objective minimizes the sum of edge weights.

\section{Three Methods}
\label{methods}
In this section, we present three existing iterative algorithms---Balanced Label Propagation (BLP), Social Hash partitioner (SHP), and Restreaming Linear Deterministic Greedy (reLDG)---as they are published in the literature. This section acts as a quick introduction to the algorithms before we dissect them further in Section \ref{dissect}. As we are more concerned with design modules than optimizing performance in this work, we push discussions of complexity and parallelization of these base methods to Appendix \ref{basemethods}.
\subsection{Balanced Label Propagation}
\label{BLP}
BLP~\cite{UB2013} takes a constrained view of the label propagation literature surrounding semi-supervised learning and community detection~\cite{Zhu2002, Raghavan2007}. The BLP algorithm makes iterative, balanced improvements to an initial feasible partitioning (labelling) of the node set until an equilibrium is achieved (or a maximum number of iterations is reached). In this work, we use the simplest initialization---random balanced assignment---for comparison with other methods, though careful initialization has been shown to achieve a better equilibrium cut, depending on both context and available metadata~\cite{UB2013}.

Each iteration proceeds as follows: for every node $u\in V$, we compute its move \textit{gain}, the maximum improvement in co-located neighbor count if unilaterally relocated, defined as
\begin{equation}
g_u = \max_{i\in[k]} |N(u)\cap V_i| - |N(u)\cap V_{P(u)}|.
\label{gain}
\end{equation}
Clearly $g_u \geq 0$ for all $u\in V$. When $g_u=0$, node $u$ is effectively ``satisfied'' and gets to keep its shard assignment, a concept we formalize in Section \ref{dissect}. Nodes with $g_u > 0$ are placed in a queue to move to their target shard in order of decreasing gain. This information is funneled into a linear program that solves a circulation problem within the iteration, determining the maximum number of top nodes to move from these queues to maximize gain while abiding by constraints on each shard size. Conducting these node relocations for all shard pairs constitutes one iteration, and the process repeats until no nodes desires to move, or a maximum number of iterations is reached.

\subsection{Social Hash Partitioner}
\label{shp}

Social Hash Partitioner (SHP)~\cite{Kabiljo2017} is a two-level framework for producing and updating partitions of graph data, developed for optimizing Facebook's SocialHash~\cite{Shalita2016} infrastructure. It was built to partition more general hypergraph data~\cite{Catalyurek1999}, minimizing an objective called \textit{fanout} (the average number of shards a hyperedge spans). The algorithm is easily ``extended'' to partitioning non-hyper graphs under the traditional edge-cut objective, though that evaluation has not been done in the literature. Its mechanism for balancing shard size is a natural $k$-way extension of one of the earliest balanced partitioning algorithms for minimizing edge-cut, the Kernighan--Lin algorithm~\cite{Kernighan1970}.

Like BLP, SHP starts from an initial partitioning of the node set and makes iterative improvements to the edge-cut objective until equilibrium or a maximum number of iterations is reached. The original implementation of SHP proposed in \cite{Kabiljo2017} operates as follows: gains, as defined in Equation \eqref{gain}, are computed for each node $u\in V$. Nodes with $g_u > 0$ are bucketed in exponentially sized bins by gain to move to their target shard, $t_u$, storing two histograms per shard pair.
For all pairs, these buckets are deterministically paired and swapped from highest to lowest gain until the last bucket, where nodes are swapped in a random order until no more swaps can increase the overall gain of relocation.

As a simplifying step, in this work we modify the algorithm to store two \textit{fully sorted} queues of nodes per shard pair. Individual nodes are then paired off and swapped deterministically in order of most gain, a strict improvement in the within-iteration objective over the more easily distributed implementation in \cite{Kabiljo2017}. As this work studies the effects of these design decisions on the objective and less about computational trade-offs for distributed implementations, this simplification allows us to study the SHP algorithm in its ``best'' form. At the same time, we acknowledge that better performance within an iteration doesn't necessarily translate to better performance in equilibrium.

As another important modification, we define gain in this case to include satisfied nodes, those with $g_u = 0$, in the move queues for their second-best shard, effectively sorting by a modified form of gain over \textit{external} shards:
\begin{equation}
g'_u = \max_{i\in [k]\setminus P(u)} |N(u) \cap V_i| - |N(u) \cap V_{P(u)}|.
\label{modgain}
\end{equation}
For later reference, we denote satisfied nodes in the move queue of their best external shard as "second-best'' nodes. Such nodes are included at their own expense to possibly allow for swaps with a net-positive global gain, a hallmark characteristic of the original 1970 Kernighan--Lin algorithm. For this reason, we will denote this clarified implementation by ``\KLSHP''.

Two simplifications of \KLSHP, denoted ``\SHPI'' and ``\SHPII'' in our work, are also implemented to study the effect of constituent design decisions. In \SHPI, we both exclude second-best nodes from relocation queues and forego the prioritized ordering, randomly pairing nodes to be swapped until one queue is empty. In \SHPII, we exclude second-best nodes but still swap in a sorted order, restricting swaps to only involve nodes with positive move gains. 
Comparing \KLSHP and \SHPII showcases the effect of locally-negative (KL-style) swaps; between \SHPII and \SHPI, that of the sorted ordering.

\subsection{Restreamed Linear Deterministic Greedy}
\label{reLDG}
Restreamed Linear Deterministic Greedy (reLDG)~\cite{NU2013} falls in a subclass of iterative algorithms known as a \textit{(re)streaming} algorithms. This class is motivated by the context of single-pass online graph loading, where a program parses through a graph file, serially reading graph data from a source to a destination cluster~\cite{Stanton2012}. The multi-pass/iterative version of this approach was proposed in~\cite{NU2013}, considering restreaming methods for partitioning as potentially competitive with offline, non-streaming algorithms.

The reLDG algorithm was derived from LDG~\cite{Stanton2012}, repeatedly streaming over the node list until a maximum number of iterations is reached~\cite{NU2013}. Specifically, reLDG does the following at each iteration: for each $u\in V$, assign $u$ to the shard which satisfies
\[
\arg\max_{i\in[k]}|V_i^{(t)} \cap N(u)|\cdot\left(1-\frac{x_i^{(t)}}{C}\right).
\]
Here $V_i^{(t)}$ holds the current population of shard $i$, from the previous or the current stream (when applicable, if the node has already been ``seen'' this iteration), $x_i^{(t)}$ holds the number of nodes assigned to $i$ in the current stream, and $C$ is the shard capacity constraint, $C = (1+\epsilon)\cdot\left \lceil \frac{|V|}{k}\right \rceil$. Notice that as the shards begin to fill up, the ``multiplicative weight'' $1-x_i^{(t)}/C$ approaches zero, eventually eliminating filled shards from consideration.

Unsurprisingly, the position of a node in the stream order plays a large role in the quality of the resulting partition around that node. Nodes at the beginning of the stream are not yet impacted by the multiplicative weight, while the assignment of nodes at the end may be dominated by this term. The previous study of LDG and reLDG focused on a random (persistent) order, with some consideration given to BFS/DFS order in the original LDG work. We revisit the idea of stream orderings in Section~\ref{orders}.

\section{Taxonomy of Balanced Partitioning Algorithms}
\label{dissect}
In this section, we introduce a decomposition of the iterative methods in Section~\ref{methods} into modular parts, developing a common taxonomy of these algorithms. The identified distinctions are (1) how node relocations are carried out, (2) whether or not nodes may be exempt from relocation due to ``incumbency'', and (3) if the algorithm makes use of ``priority''.

\xhdr{Synchronous vs.\ streaming assignment}
BLP and SHP conduct all node relocations simultaneously, utilizing information from a static snapshot of the previous partitioning. In reLDG, nodes are assigned one at a time from a serial pass over the node list, changing the assignment landscape for nodes later in the stream. In this work, we denote this distinction as \textit{synchronous} vs.\ \textit{streaming} assignment. 

\xhdr{Flow-based vs.\ pairwise constraint handling}
Between the two synchronous algorithms, BLP uses a linear program to maintain balance, maximizing relocation gain subject to constraints that the net inflow of nodes to each shard equals the net outflow, up to a desired imbalance parameter. \KLSHP on the other hand simply ensures that the same number of nodes move between shard pairs. As the former has a fluid dynamical interpretation, we call this strategy \textit{flow-based} constraint handing. The latter we call \textit{pairwise} constraint handling.

\xhdr{Incumbency preference}
\label{statusquo}
Recall that in BLP, \SHPI, and \SHPII, only nodes with gain $g_u>0$ are eligible for relocation. All other nodes are reassigned to their previous shard assignment. On the other hand, \KLSHP allows for suboptimal movement via relocating ``second-best'' nodes for a globally-positive swap. ReLDG serially assigns each node at every iteration, potentially evicting nodes late in the stream which were assigned to a desirable shard at the previous iteration. BLP and the restricted SHP algorithms therefore have \textit{incumbency preference}, always allowing nodes to keep their previous assignment.

To parameterize this preference, we introduce a threshold $c$ for an algorithm's level of ``incumbency'', defined as the allowance for nodes with $g_u\leq c$ to keep their last assignment. In other words, only nodes with $g_u > c$ are eligible for relocation. Vanilla restreaming (random stream order) corresponds to a choice of $c=-\infty$ (no incumbency), while BLP corresponds to a choice of $c=0$. Any of the algorithms can be easily modified to accommodate $c$ as an input parameter to the method, and we explore this flexibility in Section \ref{results}.

\xhdr{Priority ordering}
We define \textit{priority} in this work as an ordering of how non-incumbent nodes are considered for relocation. Both BLP and \KLSHP prioritize gain, Eq~\eqref{gain} or \eqref{modgain}, in conducting node relocations. They utilize sorting within relocation queues to move nodes with the highest gain first, thereby directly optimizing edge-cut. On the other hand, vanilla reLDG does not prioritize any metric in relocating nodes; nodes are prioritized randomly in the stream order. This fact highlights an opportunity for improvement among this family of algorithms, which we explore in the following section.

\section{Priority Through Stream Orders}
\label{orders}
We now consider how node prioritization can be incorporated into the reLDG algorithm through a thoughtful choice of the order in which nodes are streamed. An adversarial demonstration for LDG (a single stream iteration) given by Stanton \& Kliot~\cite{Stanton2012} clearly shows how the stream order of nodes can play a large role in the final cut quality of (re)LDG. That said, the original work focused only on random, BFS, and DFS stream orders within the graph loading context. In this section, we investigate alternative \textit{static} as well as \textit{dynamic} prioritized orderings. As forward pointers, we study the performance of reLDG with these orders in Section~\ref{results}, most specifically in Table~\ref{tab:results}. The rank correlation between different stream orders is inspected in Figure~\ref{fig:kendall}. We discuss the complexity of computing these stream orders in Appendix~\ref{streamcomplexity}.

\subsection{Static}
We classify all of the previously considered orders---(persistent) random, BFS, and DFS---as ``static'', as they are defined based on graph properties alone and need not be updated between iterations. Of these, we consider only random and BFS, rooted at the largest-degree node. BFS (1) broadly outperforms DFS in~\cite{Stanton2012} and (2) is a good surrogate for the node order obtained from a web crawler or similar graph exploration process. Random order is analogous to the random assignments we use to initialize BLP and the SHP algorithms. We add two additional prioritized static orderings for consideration: degree and local clustering coefficient~\cite{Watts1998}, both in decreasing order.

\subsection{Dynamic}
Dynamic stream orders are updated between iterations of a restreaming algorithm. We introduce two prioritized dynamic stream orders in this work: gain order, as defined in Equation~\eqref{gain}, and a new ordering we call \textit{ambivalence}. Note that random order could be implemented in a dynamic manner, shuffling the order between iterations. However, we choose to focus our attention on dynamic stream orders which leverage updated information in the network.

\xhdr{Gain}  A natural first idea is to follow the lead of synchronous algorithms such as BLP and stream nodes in decreasing gain order. In other words, place nodes that stand to gain the most early in the stream, and those that do not stand to gain much late in the stream. However, this ordering easily backfires in the streaming setting, where nodes with a low gain value, e.g.~$g_u=0$, may have little to gain but at the same time risk incurring a significant \textit{loss} by moving. Placing such nodes at the end of the stream makes them likely to be ``evicted'' from their satisfactory assignment.

\xhdr{Ambivalence}
To remedy the above issues with gain-sorted streaming, we propose a novel metric, {\it ambivalence}, as a prioritized stream order, streaming in increasing order. That is, nodes that strongly prefer to either move or stay in place are placed early in the order, thereby giving the nodes a good chances at getting what they want, whereas nodes which are more ``ambivalent'' are streamed later. We define the ambivalence of node $u$, $a_u$, as the (negative) maximum difference in co-assigned neighbors when contrasting the current assignment with the best possible external assignment:
\begin{equation}
a_u = - \max_{i\in[k]\setminus P(u)} \left| |N(u)\cap V_i| - |N(u) \cap V_{P(u)}|\right|.
\label{ambiv}
\end{equation}
The higher (less negative) the score $a_u$, the smaller the gap in neighbor co-location count between the node's current assignment and the best other shard. As ambivalence ranges from negative degree to 0, the order has a tendency to push low degree nodes towards the end of the stream. In Section~\ref{stream} we observe a high correlation between ambivalence and degree. Further, in Appendix~\ref{bounds} we show that the expected initial ambivalence is upper and lower bounded by monotonic (linear) functions of the node degree.

\xhdr{Initialization}
The two dynamic schemes are defined relative to a partition, $P$. Specifically, they are undefined during the first pass of reLDG. As such, we define both as using degree order for their first iteration (and do so in Section~\ref{results}); degree order gives the best empirical performance of the static orders after many iterations, as in Table~\ref{tab:results}, but also after one iteration (not shown).

\section{Results}
\label{results}
We design experiments to answer the following questions:

\begin{enumerate}
\item How do the presented algorithms for balanced graph partitioning, which previously haven't been well-benchmarked, compare in terms of cut quality?
\item What role do the modules in Section~\ref{dissect} play in the performance of these methods?
\item How does the performance of our prioritized reLDG algorithm scale with increasing $k$?
\item How does stream order affect the performance of reLDG?
\end{enumerate}

\begin{table}[t!]
\centering
\caption{Test networks, all from the SNAP repository~\cite{Leskovec2014}. Here $\bar{d}$ is average degree and LCC denotes the percent of nodes in the largest connected component.}
\begin{tabular}{|l || c | c | c | c | c | c |}
\hline
Graph 			& 	$n$		& 	$m$		&$\bar{d}$&      LCC 	& Type 	\\\hline
\texttt{pokec} 		&  1,632,803 	& 22,301,964	& 27.32	& 100\%	& Social 	 \\
\texttt{livejournal}	&   4,847,571	& 43,110,428	& 17.79	& 99.9\%	& Social 	 \\
\texttt{orkut} 		&  3,072,441	& 63,464,467	& 41.31	& 100\%	& Social 	 \\\hline
\texttt{notredame} 	&   325,729	& 1,103,835	& 6.78	& 100\%	& Web 	 \\
\texttt{stanford} 		&   281,903	& 1,992,636	& 14.14	& 91\%	& Web 	 \\
\texttt{google} 		&   875,713	& 4,322,051	& 9.87	& 98\%	& Web 	 \\
\texttt{berkstan} 	&   685,230	& 7,600,595	& 19.41	& 96\%	& Web 	 \\
\hline
\end{tabular}
\label{tab:networks}
\end{table}

\begin{table*}[t!]
\centering
\caption{Internal edge fractions of 16-shard partitioning after 10 iterations of each method under exact balance ($\epsilon=0$). Highest quality, excluding METIS (0.001), in bold. As a family, reLDG and its various stream orderings outperform the top performer of the synchronous class, with the best performance coming from ambivalence (4 of 7 networks). Of the synchronous methods, \SHPI and \SHPII show superior results over their more advanced counterparts on all graphs.}
\renewcommand{\arraystretch}{1.2}
\begin{tabular}{| l || c | c | c | c || c | c | c | c || c | c || c |}
\hline
\multirow{2}{1cm}{} & \multicolumn{4}{c||}{Synchronous} & \multicolumn{6}{c||}{Streaming (reLDG)} & \\
\cline{2-12}
Graph    			& \SHPI 	&\SHPII 	& \KLSHP & BLP	& Random&  CC	&BFS		&    Degree	& Ambivalence	&  Gain 	& METIS\\
\hline
\texttt{pokec}         	&  0.578   	&  0.595	&  0.585	& 0.532 	& 0.675	&0.681	&  0.698		& \textbf{0.716}	&  0.712		&   0.618 	& 0.827\\
\texttt{livejournal}	&  0.626   	&  0.648	&  0.625	& 0.617	&   0.674	& 0.666 	&  0.731		& 0.745   		& \textbf{0.749}	&   0.671	& 0.899\\
\texttt{orkut}         	&  0.535  	&  0.555	&  0.534  	& 0.531 	&   0.650 	& 0.628 	&  0.665  		& \textbf{0.689} & 	0.679	&   0.626	& 0.711\\\hline
\texttt{notredame}     	&  0.783 	&  0.635	&  0.652 	& 0.612 	&   0.882	& 0.864 	& \textbf{0.929}	&  0.902		& 	0.924	&   0.878	& 0.982\\
\texttt{stanford}       	&  0.737  	&  0.711	&  0.697 	& 0.629 	&   0.856	& 0.844 	&  0.891		&  0.900   		& \textbf{0.916}	&   0.793 	& 0.973\\
\texttt{google}         	&  0.670  	&  0.603	&  0.616 	& 0.606 	&   0.848 	& 0.814 	&  0.868		&  0.959		& \textbf{0.964}	&   0.799 	& 0.989\\
\texttt{berkstan}       	&  0.701	&  0.652	&  0.658 	& 0.585 	&   0.858 	& 0.805 	&  0.895		& 	0.913	& \textbf{0.918}	&   0.766 	& 0.988\\\hline
\end{tabular}
\label{tab:results}
\end{table*}

\begin{figure*}[h!]
    \centering
    \includegraphics[width=\textwidth]{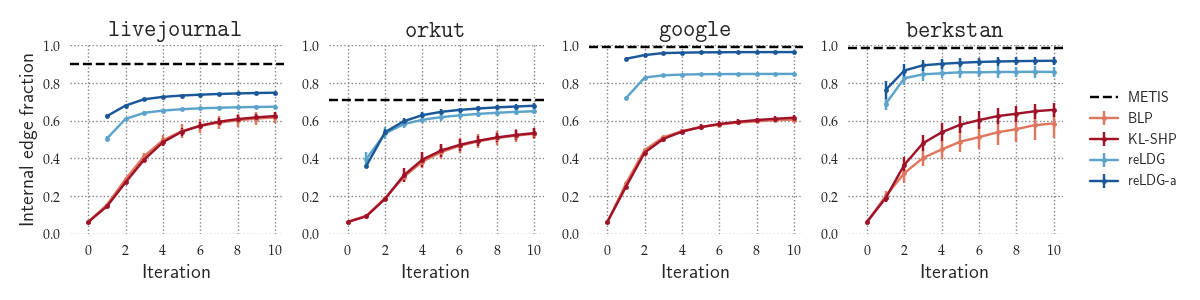}
    \caption{Internal edge fractions of $16$-shard partitioning as a function of iteration for BLP, \KLSHP, reLDG, and our best athlete, ambivalence-sorted reLDG (reLDG-a). Dotted line represents METIS (0.001). Recall that BLP and \KLSHP begin from feasible initial partitionings, hence the iteration-0 values. reLDG-a outperforms all base methods. As a family, reLDG also finds quality partitions in fewer iterations than the synchronous algorithms.}
    \label{fig:fracs}
\end{figure*}

We focus our tests of balanced partitioning algorithms on a fixed number of shards ($k=16$) and number of iterations ($t=10$), studying varied social and web networks described in Table~\ref{tab:networks}. Directed graphs were made undirected by reciprocating all edges, storing both forward and backward directed edges. Some plots focus only on the \texttt{pokec} and \texttt{notredame} graphs but are then representative of social and web graphs, respectively.  All methods are presented under exact balance, $\epsilon=0$ in the problem formulation in Section~\ref{def}. Relative performance does not change when allowing slight imbalance ($\epsilon=0.05$), so we omit imbalanced results. Given that all methods are to some extent random, if only in the handling of tie-breaks, all tabulated results were averaged over ten trials.

\subsection{Performance of the methods}
\label{overallresults}
To study question (1), we report the partition qualities of all methods---BLP, \KLSHP and its restricted forms (\SHPI, \SHPII), and reLDG with six stream orders (random, local clustering coefficient, BFS, degree, gain, ambivalence)---on all networks in Table \ref{tab:results}. In Figure \ref{fig:fracs} we further plot the internal edge fraction as a function of iteration for the existing methods (BLP, \KLSHP, and reLDG) as well as our best new method based on dynamic stream ordering, the ambivalence-sorted reLDG algorithm. 

Interpreting Table~\ref{tab:results}, reLDG with a random stream order outperforms the synchronous methods in all networks by a sizable margin, a surprising result considering that reLDG is generally regarded as further constrained by its online design. Furthermore, ambivalence order results in the best partition on four out of seven graphs (and is competitive with the best results on all seven). Several details of the relative performance of these algorithms deserve further analysis and commentary in the following sections. To answer question (2), the role of modules, we will now sequentially interpret Table~\ref{tab:results} in terms of the constraint handling and incumbency. 

\xhdr{Flow-based vs.\ pairwise constraint handling}
Analyzing the results for the synchronous methods, natural intuitions would suggest that BLP would outperform the SHP-based algorithms, as the flow-based constraint handling expands the space of allowable relocations (vs.\ pairwise handling for SHP-based methods). However, not only does \KLSHP outperform BLP on all networks, but \SHPI and \SHPII (the restricted forms of \KLSHP) give even higher quality partitions, with \SHPII performing best on the social networks and \SHPI on web (among synchronous methods). 

Recall the differences between these algorithms (see also Sections~\ref{methods} and~\ref{dissect}): the algorithms are ordered least to most advanced from left to right in Table~\ref{tab:results}. The conclusion to draw from the synchronous results is that ``less is more'', on both web and social graphs; BLP has tremendous freedom to make flow-based reassignment, and \KLSHP has strictly increased the pool of nodes eligible to move from that of \SHPII and \SHPI. Both of these design decisions, though theoretical improvements within a local iteration, perform worse once iterated for the networks in this work\footnote{Our analysis is specific to the edge-cut objective on graphs. The utility of different modules may be very different for hypergraph partitioning under the fanout objective.}. 

\xhdr{Incumbency}
From the discussion in Section~\ref{statusquo}, one of the differentiating factors between these algorithms is the different approaches to incumbency as a modular design decision. We now analyze variations on BLP, \KLSHP, and vanilla reLDG all adapted to feature a common incumbency threshold parameter, $c$, and investigate the effect of varying this threshold on the quality of the resulting partitioning. Results form varying $c$ are given in Figure~\ref{fig:c}. 

As $c$ becomes more positive, i.e., node relocation is limited to high-gain nodes, partition quality falls. Meanwhile when $c$ is negative, allowing for Kernighan--Lin-like improvements, we don't observe much change in resulting cut quality for any of the three methods. On the \texttt{notredame} graph we see a small opportunity for superior performance around $c=0$ for \KLSHP, which equates to our \SHPII algorithm, and $c=1$ for BLP. 

Upon inspection, the spike at $c=1$ for BLP on \texttt{notredame} is overwhelmingly driven by the behavior of a many degree-1 nodes. Excluding these nodes from relocation by restricting node relocation to those with $g_u > 1$ appears to helps settle the chaos of the algorithm, a result analogous to performance gains observed from strategic edge sparsification for graph clustering~\cite{Satuluri2011}. The assignment of degree-1 nodes could potentially be deferred until after other nodes are stably assigned. We do not further explore the idea of incorporating this deferral into e.g.~reLDG, an altogether different modification than setting $c=1$, but flag it for consideration by practitioners.

\begin{figure}[t!]
\centering
\includegraphics[width=\columnwidth]{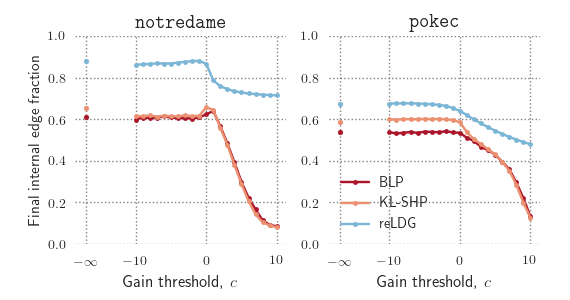}
\caption{Internal edge fractions for BLP, \KLSHP, and reLDG after 10 iterations when only nodes with gain $g_u > c$ are eligible for relocation. Across the board, increasing $c$ diminishes performance. ReLDG does best with $c=-\infty$ (see also Table~\ref{tab:results}). There may be opportunities for slightly improved performance by tuning the $c$ parameter in the cases of BLP and \KLSHP.}
\label{fig:c}
\end{figure}

\xhdr{Periodicity} Recall that synchronous algorithms compute node gains and make relocations synchronously based on a snapshot of the graph. These reassignments can cause neighboring nodes to ``pass'' each other in the move, but the implications of this effect have not been well-studied for the iterative algorithms we consider. We define the \textit{periodicity} of a node assignment at iteration $t$ as the number of iterations since the last assignment to its current shard. This quantity allows us to better understand how nodes bounce back and forth between assignments under each method. 

Formally, for node $u$ at iteration $t$, a period is defined as the minimum integer $x$ such that $P^{(t)}(u) = P^{(t-x)}(u)$, where $P^{(i)}(u)$ denotes the assignment of node $u$ at iteration $i$. In Figure~\ref{fig:periodicities} we explore the periodicity of node relocations across BLP, \KLSHP, and vanilla reLDG. In both BLP and \KLSHP, a large portion of nodes experience a periodicity of two in early iterations, especially on web graphs (\texttt{notredame} is representative). In other words, nodes are found oscillating between shards, repeatedly missing their neighbors in the move. This pathology is not present reLDG, as each node is relocated one at a time, providing full and updated picture of the assignment for each node. Periodicity helps illustrate this relative shortcoming of the synchronous algorithms compared to streaming.

\begin{figure}[t!]
  \includegraphics[width=0.5\textwidth]{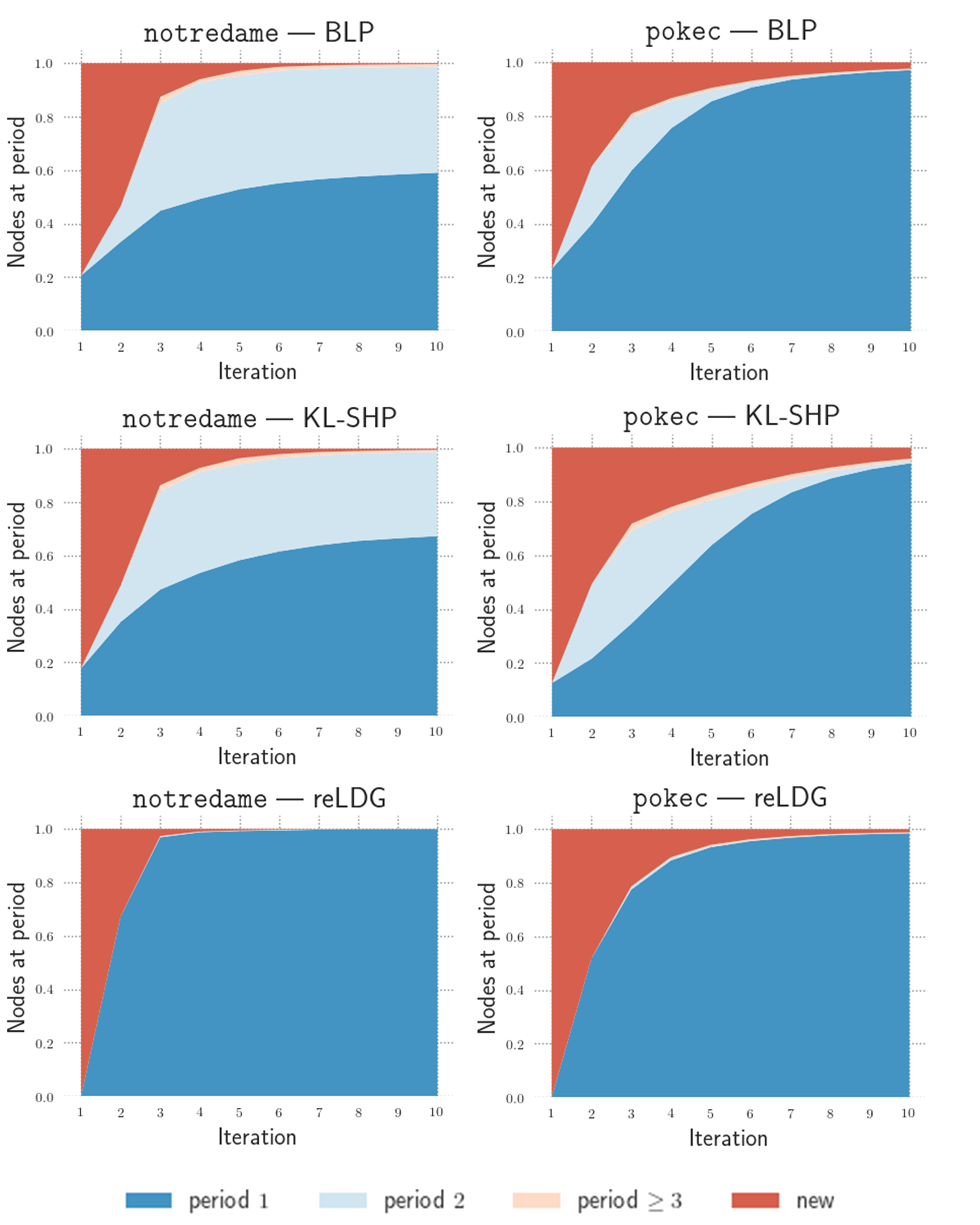}
  \caption{Fraction of nodes at a given period per iteration of BLP, \KLSHP, and reLDG. The ``new'' category denotes nodes being assigned to its current shard for the first time. On the web graph, BLP and \KLSHP have many nodes stuck at a period of 2, bouncing back and forth between two assignments.}
  \label{fig:periodicities}
\end{figure}

\begin{table}[t!]
\centering
\caption{Results from varying the number of shards, $k$. All results on LiveJournal network with $\epsilon$ given under each method name. Bold denotes most performant method (excluding METIS). 
Ambivalence-sorted reLDG (reLDG-a) consistently yields a higher quality partition than these previously benchmarked methods~\cite{aydin2019distributed}.}
\begin{tabular}{| l || c | c | c | c | c || c |}
\hline
	      		& Spinner 	&   LE/A	&   LE/C 	&   BLP	&  reLDG-a 	& METIS\\
$k$	      		&  0.05	&   0.0	&   0.0	&   0.05	&      0.0 		& 0.001\\
\hline
20          		&  0.62   	& 0.643	& 0.725	&   0.600  	&\textbf{0.733}	& 0.890\\
40    			&  0.60   	& 0.592	& 0.663	&   0.562	&\textbf{0.691}	& 0.869\\
60          		&  0.57  	& 0.570	&  0.634	&  0.537	&\textbf{0.661}	& 0.857\\
80      		&  0.56 	& 0.567	& 0.614    	&   0.520	&\textbf{0.648}	& 0.845\\
100       		&  0.54  	& 0.550	& 0.585    	&   0.517	&\textbf{0.636}	& 0.839\\
\hline
\end{tabular}
\label{tab:le}
\end{table}

\begin{figure*}[t!]
    \centering
    \begin{subfigure}[h]{0.4\textwidth}
        \centering
        \includegraphics[width=\textwidth]{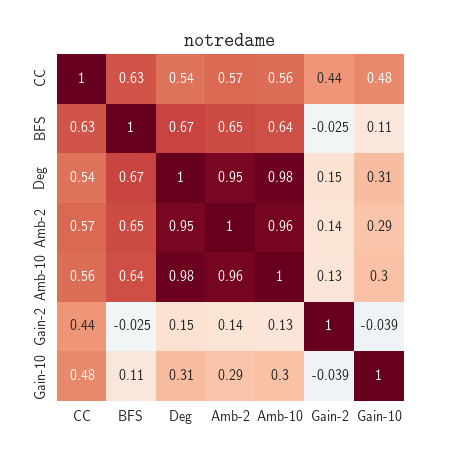}
    \end{subfigure}%
    ~
    \begin{subfigure}[h]{0.4\textwidth}
        \centering
        \includegraphics[width=\textwidth]{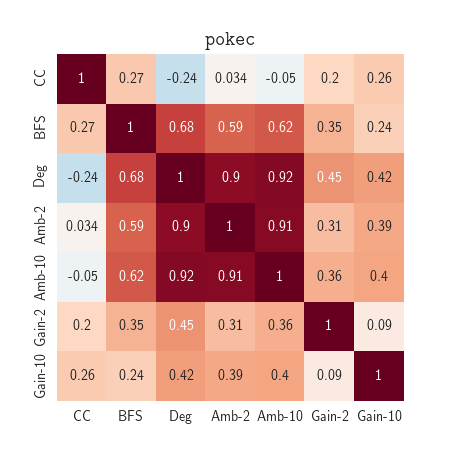}
    \end{subfigure}
    \caption{Weighted Kendall's $\tau$ rank correlation between stream orders for the \texttt{notredame} and \texttt{pokec} networks. The dynamic methods are initialized by one stream of degree order. Ambivalence and gain correlations are reported both for the 2nd and 10th iteration of the method. Degree and ambivalence are strongly correlated on both, while gain displays low correlations with the rest.}
    \label{fig:kendall}
\end{figure*}

\xhdr{Number of shards} In studying question (3), the effect of increasing $k$, we take the opportunity to benchmark our results against published numbers for other algorithms. We recreate the results in the linear embedding paper \cite{aydin2019distributed} in Table \ref{tab:le} for $k \in \{ 20, 40, 60, 80, 100\}$, borrowing values for Spinner \cite{Martella2017} and the linear embedding variations (Affinity, ``LE/A'', and Combination ``LE/C'') from that work. Their affinity mapping algorithm (LE/A) is the proposed linear embedding method, while Combination (LE/C) adds post-processing steps to optimize cuts for partitioning. Finally, we re-evaluate BLP, run with $\epsilon=0.05$ to follow results in \cite{aydin2019distributed}, and add columns for METIS and our best algorithm, ambivalence-sorted reLDG. We see that the prioritized streaming algorithm outperforms all previously benchmarked methods, and by increasing margins with increasing $k$. An explanation for this result is that while the Combination algorithm handles the embedding and partitioning steps separately, reLDG optimizes both simultaneously as a common objective.

\subsection{Performance of prioritized streaming}
\label{stream}
To address question (4), the final partition qualities under random, local clustering coefficient, BFS, degree, ambivalence, and gain orders are reported in Table \ref{tab:results} under the streaming header. First, as stated in Section~\ref{overallresults}, we see that the worst performing stream orders for reLDG result in partitions that are typically better than the most performant of the synchronous algorithms. The explanation for this surprising result is that relocating all nodes simultaneously results in neighbors ``passing'' each other, whereas all nodes see the updated partition landscape before being assigned in the streaming context. 

Unsurprisingly, gain order performs worst of all the orderings of reLDG, consistent with the intuitions discussed in Section~\ref{orders}.
Ambivalence ordering was specifically designed to solve these problems, taking into account the gain and/or loss of nodes should they move or be forced to move. Among the other orderings, ambivalence and degree are equally successful orders on social networks, and ambivalence is most performant on web graphs overall.

To understand the stream orders further, Figure~\ref{fig:kendall} shows the weighted Kendall's $\tau$~\cite{Vigna2014} correlation values for each pair of stream orders for \texttt{notredame} and \texttt{pokec} as test graphs. 
For the dynamic orders, we study both initial orders at iteration 2 (Amb-2, Gain-2) and iteration 10 (Amb-10, Gain-10). First notice that the two gain orders are very weakly correlated with each other and the other orders. This result is yet another example of the improper fit of gain-based priority in the streaming context. Next, degree and ambivalence are highly correlated measures on both test graphs---all pairs surpassing the 0.9 threshold, signaling nearly equivalent rankings~\cite{Voorhees2002}. We provide a theoretical explanation for this relationship in Appendix~\ref{bounds}, showing that the expected initial ambivalence is upper and lower bounded by monotonic (linear) functions of degree, consistent with the high correlation between the two in Figure~\ref{fig:kendall}. The fact that increasing ambivalence order is well approximated by decreasing degree, a static ordering, provides a simple alternative to ambivalence if the additional complexity of implementing dynamic stream orders is onerous.

As a final observation, on \texttt{notredame} the local clustering coefficient has moderate correlation with the other orders, whereas its correlations are more neutral, or even negative, on the \texttt{pokec} social graph. It is a well documented fact that local clustering coefficient is inversely related to degree on many complex networks~\cite{Leskovec2008, Ugander2011, Regan2003}, which would suggest the negative correlations seen in the social network.
In the \texttt{notredame} network, on the other hand, 88\% of nodes have degree $<10$, and 49\% have degree 1. Nodes with degree 1 have a clustering coefficient of 0; half of the node set is thus tied for last place in both orderings. Furthermore, when restricted to nodes with degree $<10$ we found that clustering coefficient increased with degree in the network. Thus, the strange correlations are an artifact of the degree distribution of this specific network. 

\section{Conclusion}
\label{conc}
In this work, we dissect the design decisions involved in recent highly-scalable iterative algorithms for balanced partitioning. Based on this dissection, we introduce a new class, \textit{prioritized streaming algorithms}, that leverages prioritization ideas from synchronous algorithms within the streaming setting. We contribute a novel priority ordering, ambivalence order, for streaming algorithms. When tested on various social and web graphs, we find that streaming algorithms do not suffer from observed pathologies of the synchronous assignment process used by BLP or SHP-based algorithms---namely moving or swapping neighboring nodes away from or past each other. Even vanilla reLDG (random stream order) results in higher quality partitions on all tested graphs than BLP and \KLSHP. 

The best restreaming results come from ambivalence and degree orderings, being superior on six of the seven tested graphs. Ambivalence and degree are highly correlated orderings, offering degree order as the preferred static ordering if computing ambivalence is burdensome. Though initially proposed in the online setting---moving graphs between clusters---our results clarify that restreaming algorithms are major contenders as highly scalable offline partitioners.

\xhdr{Reproducibility}
Implementations of BLP, SHP variations, reLDG, prioritized reLDG, as well as notebooks replicating plots in this paper are available at: \url{https://github.com/ameloa/streamorder}.

\begin{acks}
We thank Brian Karrer, Joel Nishimura, Arjun Seshadri, and Hao Yin for helpful comments and discussions.
This work is funded in part by a Young Investigator Award from the Army Research Office (JU, 73348-NS-YIP)
and a National Science Foundation Graduate Research Fellowship (AA, 2017237604).
\end{acks}

\bibliographystyle{ACM-Reference-Format}
\bibliography{partitioning}

%%% -*-BibTeX-*-
%%% Do NOT edit. File created by BibTeX with style
%%% ACM-Reference-Format-Journals [18-Jan-2012].

\begin{thebibliography}{45}

%%% ====================================================================
%%% NOTE TO THE USER: you can override these defaults by providing
%%% customized versions of any of these macros before the \bibliography
%%% command.  Each of them MUST provide its own final punctuation,
%%% except for \shownote{}, \showDOI{}, and \showURL{}.  The latter two
%%% do not use final punctuation, in order to avoid confusing it with
%%% the Web address.
%%%
%%% To suppress output of a particular field, define its macro to expand
%%% to an empty string, or better, \unskip, like this:
%%%
%%% \newcommand{\showDOI}[1]{\unskip}   % LaTeX syntax
%%%
%%% \def \showDOI #1{\unskip}           % plain TeX syntax
%%%
%%% ====================================================================

\ifx \showCODEN    \undefined \def \showCODEN     #1{\unskip}     \fi
\ifx \showDOI      \undefined \def \showDOI       #1{#1}\fi
\ifx \showISBNx    \undefined \def \showISBNx     #1{\unskip}     \fi
\ifx \showISBNxiii \undefined \def \showISBNxiii  #1{\unskip}     \fi
\ifx \showISSN     \undefined \def \showISSN      #1{\unskip}     \fi
\ifx \showLCCN     \undefined \def \showLCCN      #1{\unskip}     \fi
\ifx \shownote     \undefined \def \shownote      #1{#1}          \fi
\ifx \showarticletitle \undefined \def \showarticletitle #1{#1}   \fi
\ifx \showURL      \undefined \def \showURL       {\relax}        \fi
% The following commands are used for tagged output and should be
% invisible to TeX
\providecommand\bibfield[2]{#2}
\providecommand\bibinfo[2]{#2}
\providecommand\natexlab[1]{#1}
\providecommand\showeprint[2][]{arXiv:#2}

\bibitem[\protect\citeauthoryear{Andreev and Racke}{Andreev and Racke}{2006}]%
        {Andreev06}
\bibfield{author}{\bibinfo{person}{K. Andreev} {and} \bibinfo{person}{H.
  Racke}.} \bibinfo{year}{2006}\natexlab{}.
\newblock \showarticletitle{Balanced Graph Partitioning}.
\newblock \bibinfo{journal}{\emph{Theory of Computing Systems}}
  \bibinfo{volume}{39}, \bibinfo{number}{6} (\bibinfo{year}{2006}),
  \bibinfo{pages}{929 -- 939}.
\newblock


\bibitem[\protect\citeauthoryear{Armbruster, F{\"u}genschuh, Helmberg, and
  Martin}{Armbruster et~al\mbox{.}}{2008}]%
        {Arm08}
\bibfield{author}{\bibinfo{person}{M. Armbruster}, \bibinfo{person}{M.
  F{\"u}genschuh}, \bibinfo{person}{C. Helmberg}, {and} \bibinfo{person}{A.
  Martin}.} \bibinfo{year}{2008}\natexlab{}.
\newblock \showarticletitle{A Comparative Study of Linear and Semidefinite
  Branch-and-Cut Methods for Solving the Minimum Graph Bisection Problem}. In
  \bibinfo{booktitle}{\emph{IPCO}}. \bibinfo{pages}{112--124}.
\newblock
\showISBNx{978-3-540-68891-4}


\bibitem[\protect\citeauthoryear{Aydin, Bateni, and Mirrokni}{Aydin
  et~al\mbox{.}}{2019}]%
        {aydin2019distributed}
\bibfield{author}{\bibinfo{person}{Kevin Aydin},
  \bibinfo{person}{MohammadHossein Bateni}, {and} \bibinfo{person}{Vahab
  Mirrokni}.} \bibinfo{year}{2019}\natexlab{}.
\newblock \showarticletitle{Distributed balanced partitioning via linear
  embedding}.
\newblock \bibinfo{journal}{\emph{Algorithms}} \bibinfo{volume}{12},
  \bibinfo{number}{8} (\bibinfo{year}{2019}), \bibinfo{pages}{162}.
\newblock


\bibitem[\protect\citeauthoryear{Bateni, Behnezhad, Derakhshan, Hajiaghayi,
  Kiveris, Lattanzi, and Mirrokni}{Bateni et~al\mbox{.}}{2017}]%
        {bateni2017affinity}
\bibfield{author}{\bibinfo{person}{MohammadHossein Bateni},
  \bibinfo{person}{Soheil Behnezhad}, \bibinfo{person}{Mahsa Derakhshan},
  \bibinfo{person}{MohammadTaghi Hajiaghayi}, \bibinfo{person}{Raimondas
  Kiveris}, \bibinfo{person}{Silvio Lattanzi}, {and} \bibinfo{person}{Vahab
  Mirrokni}.} \bibinfo{year}{2017}\natexlab{}.
\newblock \showarticletitle{Affinity clustering: Hierarchical clustering at
  scale}. In \bibinfo{booktitle}{\emph{NIPS}}. \bibinfo{pages}{6864--6874}.
\newblock


\bibitem[\protect\citeauthoryear{Brunetta, Conforti, and Rinaldi}{Brunetta
  et~al\mbox{.}}{1997}]%
        {Brunetta1997}
\bibfield{author}{\bibinfo{person}{L. Brunetta}, \bibinfo{person}{M. Conforti},
  {and} \bibinfo{person}{G. Rinaldi}.} \bibinfo{year}{1997}\natexlab{}.
\newblock \showarticletitle{A branch-and-cut algorithm for the equicut
  problem}.
\newblock \bibinfo{journal}{\emph{Mathematical Programming}}
  \bibinfo{volume}{78}, \bibinfo{number}{2} (\bibinfo{year}{1997}),
  \bibinfo{pages}{243--263}.
\newblock
\showISSN{1436-4646}


\bibitem[\protect\citeauthoryear{Bulu{\c{c}}, Meyerhenke, Safro, Sanders, and
  Schulz}{Bulu{\c{c}} et~al\mbox{.}}{2013}]%
        {Buluc13}
\bibfield{author}{\bibinfo{person}{A. Bulu{\c{c}}}, \bibinfo{person}{H.
  Meyerhenke}, \bibinfo{person}{I. Safro}, \bibinfo{person}{P. Sanders}, {and}
  \bibinfo{person}{C. Schulz}.} \bibinfo{year}{2013}\natexlab{}.
\newblock \showarticletitle{Recent Advances in Graph Partitioning}.
\newblock  (\bibinfo{year}{2013}).
\newblock
\showeprint[arxiv]{1311.3144}


\bibitem[\protect\citeauthoryear{{Catalyurek} and {Aykanat}}{{Catalyurek} and
  {Aykanat}}{1999}]%
        {Catalyurek1999}
\bibfield{author}{\bibinfo{person}{U.~V. {Catalyurek}} {and}
  \bibinfo{person}{C. {Aykanat}}.} \bibinfo{year}{1999}\natexlab{}.
\newblock \showarticletitle{Hypergraph-partitioning-based decomposition for
  parallel sparse-matrix vector multiplication}.
\newblock \bibinfo{journal}{\emph{IEEE Transactions on Parallel and Distributed
  Systems}} \bibinfo{volume}{10}, \bibinfo{number}{7} (\bibinfo{date}{July}
  \bibinfo{year}{1999}), \bibinfo{pages}{673--693}.
\newblock
\showISSN{1045-9219}


\bibitem[\protect\citeauthoryear{Chevalier and Safro}{Chevalier and
  Safro}{2009}]%
        {Chevalier09}
\bibfield{author}{\bibinfo{person}{C. Chevalier} {and} \bibinfo{person}{I.
  Safro}.} \bibinfo{year}{2009}\natexlab{}.
\newblock \showarticletitle{Comparison of Coarsening Schemes for Multilevel
  Graph Partitioning}. In \bibinfo{booktitle}{\emph{Learning and Intelligent
  Optimization}}. \bibinfo{pages}{191--205}.
\newblock
\showISBNx{978-3-642-11169-3}


\bibitem[\protect\citeauthoryear{Duong, Goel, Hofman, and Vassilvitskii}{Duong
  et~al\mbox{.}}{2013}]%
        {Duong2013}
\bibfield{author}{\bibinfo{person}{Q. Duong}, \bibinfo{person}{S. Goel},
  \bibinfo{person}{J. Hofman}, {and} \bibinfo{person}{S. Vassilvitskii}.}
  \bibinfo{year}{2013}\natexlab{}.
\newblock \showarticletitle{Sharding Social Networks}. In
  \bibinfo{booktitle}{\emph{WSDM}}. \bibinfo{address}{New York, NY, USA},
  \bibinfo{pages}{223--232}.
\newblock


\bibitem[\protect\citeauthoryear{{Fiduccia} and {Mattheyses}}{{Fiduccia} and
  {Mattheyses}}{1982}]%
        {Fiduccia1982}
\bibfield{author}{\bibinfo{person}{C.~M. {Fiduccia}} {and}
  \bibinfo{person}{R.~M. {Mattheyses}}.} \bibinfo{year}{1982}\natexlab{}.
\newblock \showarticletitle{A Linear-Time Heuristic for Improving Network
  Partitions}. In \bibinfo{booktitle}{\emph{19th Design Automation
  Conference}}. \bibinfo{pages}{175--181}.
\newblock
\showISSN{0146-7123}


\bibitem[\protect\citeauthoryear{Gonzalez, Low, Gu, Bickson, and
  Guestrin}{Gonzalez et~al\mbox{.}}{2012}]%
        {gonzalez2012powergraph}
\bibfield{author}{\bibinfo{person}{J.~E Gonzalez}, \bibinfo{person}{Y. Low},
  \bibinfo{person}{H. Gu}, \bibinfo{person}{D. Bickson}, {and}
  \bibinfo{person}{C. Guestrin}.} \bibinfo{year}{2012}\natexlab{}.
\newblock \showarticletitle{Powergraph: Distributed graph-parallel computation
  on natural graphs}. In \bibinfo{booktitle}{\emph{USENIX OSDI}}.
\newblock


\bibitem[\protect\citeauthoryear{Kabiljo, Karrer, Pundir, Pupyrev, and
  Shalita}{Kabiljo et~al\mbox{.}}{2017}]%
        {Kabiljo2017}
\bibfield{author}{\bibinfo{person}{I. Kabiljo}, \bibinfo{person}{B. Karrer},
  \bibinfo{person}{M. Pundir}, \bibinfo{person}{S. Pupyrev}, {and}
  \bibinfo{person}{A. Shalita}.} \bibinfo{year}{2017}\natexlab{}.
\newblock \showarticletitle{Social hash partitioner: a scalable distributed
  hypergraph partitioner}.
\newblock \bibinfo{journal}{\emph{VLDB}} \bibinfo{volume}{10},
  \bibinfo{number}{11} (\bibinfo{year}{2017}).
\newblock


\bibitem[\protect\citeauthoryear{Karypis and Kumar}{Karypis and Kumar}{1998a}]%
        {Karypis1998fast}
\bibfield{author}{\bibinfo{person}{George Karypis} {and} \bibinfo{person}{Vipin
  Kumar}.} \bibinfo{year}{1998}\natexlab{a}.
\newblock \showarticletitle{A fast and high quality multilevel scheme for
  partitioning irregular graphs}.
\newblock \bibinfo{journal}{\emph{SIAM J. Sci. Comput.}} \bibinfo{volume}{20},
  \bibinfo{number}{1} (\bibinfo{year}{1998}), \bibinfo{pages}{359--392}.
\newblock


\bibitem[\protect\citeauthoryear{Karypis and Kumar}{Karypis and Kumar}{1998b}]%
        {Karypis98}
\bibfield{author}{\bibinfo{person}{G. Karypis} {and} \bibinfo{person}{V.
  Kumar}.} \bibinfo{year}{1998}\natexlab{b}.
\newblock \showarticletitle{Multilevel k-way Partitioning Scheme for Irregular
  Graphs}.
\newblock \bibinfo{journal}{\emph{J. Parallel and Distrib. Comput.}}
  \bibinfo{volume}{48}, \bibinfo{number}{1} (\bibinfo{year}{1998}),
  \bibinfo{pages}{96 -- 129}.
\newblock
\showISSN{0743-7315}


\bibitem[\protect\citeauthoryear{{Kernighan} and {Lin}}{{Kernighan} and
  {Lin}}{1970}]%
        {Kernighan1970}
\bibfield{author}{\bibinfo{person}{B.~W. {Kernighan}} {and} \bibinfo{person}{S.
  {Lin}}.} \bibinfo{year}{1970}\natexlab{}.
\newblock \showarticletitle{An efficient heuristic procedure for partitioning
  graphs}.
\newblock \bibinfo{journal}{\emph{The Bell System Technical Journal}}
  \bibinfo{volume}{49}, \bibinfo{number}{2} (\bibinfo{date}{Feb}
  \bibinfo{year}{1970}), \bibinfo{pages}{291--307}.
\newblock
\showISSN{0005-8580}


\bibitem[\protect\citeauthoryear{{Lasalle} and {Karypis}}{{Lasalle} and
  {Karypis}}{2013}]%
        {Lasalle2013}
\bibfield{author}{\bibinfo{person}{D. {Lasalle}} {and} \bibinfo{person}{G.
  {Karypis}}.} \bibinfo{year}{2013}\natexlab{}.
\newblock \showarticletitle{Multi-threaded Graph Partitioning}. In
  \bibinfo{booktitle}{\emph{2013 IEEE 27th International Symposium on Parallel
  and Distributed Processing}}. \bibinfo{pages}{225--236}.
\newblock


\bibitem[\protect\citeauthoryear{Leskovec and Horvitz}{Leskovec and
  Horvitz}{2008}]%
        {Leskovec2008}
\bibfield{author}{\bibinfo{person}{Jure Leskovec} {and} \bibinfo{person}{Eric
  Horvitz}.} \bibinfo{year}{2008}\natexlab{}.
\newblock \showarticletitle{Planetary-Scale Views on an Instant-Messaging
  Network}. In \bibinfo{booktitle}{\emph{WWW}}. \bibinfo{pages}{915--924}.
\newblock


\bibitem[\protect\citeauthoryear{Leskovec and Krevl}{Leskovec and
  Krevl}{2014}]%
        {Leskovec2014}
\bibfield{author}{\bibinfo{person}{Jure Leskovec} {and} \bibinfo{person}{Andrej
  Krevl}.} \bibinfo{year}{2014}\natexlab{}.
\newblock \bibinfo{title}{{SNAP Datasets}: {Stanford} Large Network Dataset
  Collection}.
\newblock \bibinfo{howpublished}{\url{http://snap.stanford.edu/data}}.
\newblock


\bibitem[\protect\citeauthoryear{Leskovec, Lang, Dasgupta, and
  Mahoney}{Leskovec et~al\mbox{.}}{2009}]%
        {leskovec2009community}
\bibfield{author}{\bibinfo{person}{Jure Leskovec}, \bibinfo{person}{Kevin~J
  Lang}, \bibinfo{person}{Anirban Dasgupta}, {and} \bibinfo{person}{Michael~W
  Mahoney}.} \bibinfo{year}{2009}\natexlab{}.
\newblock \showarticletitle{Community structure in large networks: Natural
  cluster sizes and the absence of large well-defined clusters}.
\newblock \bibinfo{journal}{\emph{Internet Mathematics}} \bibinfo{volume}{6},
  \bibinfo{number}{1} (\bibinfo{year}{2009}), \bibinfo{pages}{29--123}.
\newblock


\bibitem[\protect\citeauthoryear{Malewicz, Austern, Bik, Dehnert, Horn, Leiser,
  and Czajkowski}{Malewicz et~al\mbox{.}}{2010}]%
        {Malewicz2010}
\bibfield{author}{\bibinfo{person}{G. Malewicz}, \bibinfo{person}{M.~H.
  Austern}, \bibinfo{person}{A.~J.C Bik}, \bibinfo{person}{J.~C. Dehnert},
  \bibinfo{person}{I. Horn}, \bibinfo{person}{N. Leiser}, {and}
  \bibinfo{person}{G. Czajkowski}.} \bibinfo{year}{2010}\natexlab{}.
\newblock \showarticletitle{Pregel: A System for Large-Scale Graph Processing}.
  In \bibinfo{booktitle}{\emph{SIGMOD}}.
\newblock
\showISBNx{9781450300322}


\bibitem[\protect\citeauthoryear{{Martella}, {Logothetis}, {Loukas}, and
  {Siganos}}{{Martella} et~al\mbox{.}}{2017}]%
        {Martella2017}
\bibfield{author}{\bibinfo{person}{C. {Martella}}, \bibinfo{person}{D.
  {Logothetis}}, \bibinfo{person}{A. {Loukas}}, {and} \bibinfo{person}{G.
  {Siganos}}.} \bibinfo{year}{2017}\natexlab{}.
\newblock \showarticletitle{Spinner: Scalable Graph Partitioning in the Cloud}.
  In \bibinfo{booktitle}{\emph{ICDE}}. \bibinfo{pages}{1083--1094}.
\newblock
\showISSN{2375-026X}


\bibitem[\protect\citeauthoryear{Meyerhenke, Monien, and
  Schamberger}{Meyerhenke et~al\mbox{.}}{2006}]%
        {Meyer06}
\bibfield{author}{\bibinfo{person}{H. Meyerhenke}, \bibinfo{person}{B. Monien},
  {and} \bibinfo{person}{S. Schamberger}.} \bibinfo{year}{2006}\natexlab{}.
\newblock \showarticletitle{Accelerating shape optimizing load balancing for
  parallel FEM simulations by algebraic multigrid}. In
  \bibinfo{booktitle}{\emph{IPDPS}}, Vol.~\bibinfo{volume}{2006}.
  \bibinfo{pages}{10 pp.}
\newblock
\showISBNx{1-4244-0054-6}


\bibitem[\protect\citeauthoryear{Meyerhenke, Monien, and
  Schamberger}{Meyerhenke et~al\mbox{.}}{2009}]%
        {Meyer09}
\bibfield{author}{\bibinfo{person}{H. Meyerhenke}, \bibinfo{person}{B. Monien},
  {and} \bibinfo{person}{S. Schamberger}.} \bibinfo{year}{2009}\natexlab{}.
\newblock \showarticletitle{Graph partitioning and disturbed diffusion}.
\newblock \bibinfo{journal}{\emph{Parallel Comput.}} \bibinfo{volume}{35},
  \bibinfo{number}{10} (\bibinfo{year}{2009}), \bibinfo{pages}{544 -- 569}.
\newblock
\showISSN{0167-8191}


\bibitem[\protect\citeauthoryear{Nishimura and Ugander}{Nishimura and
  Ugander}{2013}]%
        {NU2013}
\bibfield{author}{\bibinfo{person}{J. Nishimura} {and} \bibinfo{person}{J.
  Ugander}.} \bibinfo{year}{2013}\natexlab{}.
\newblock \showarticletitle{Restreaming Graph Partitioning: Simple Versatile
  Algorithms for Advanced Balancing}. In \bibinfo{booktitle}{\emph{KDD}}.
  \bibinfo{pages}{1106--1114}.
\newblock


\bibitem[\protect\citeauthoryear{Osipov and Sanders}{Osipov and
  Sanders}{2010}]%
        {Osipov10}
\bibfield{author}{\bibinfo{person}{V. Osipov} {and} \bibinfo{person}{P.
  Sanders}.} \bibinfo{year}{2010}\natexlab{}.
\newblock \showarticletitle{n-Level Graph Partitioning}.
\newblock \bibinfo{journal}{\emph{CoRR}} (\bibinfo{year}{2010}).
\newblock
\showeprint[arxiv]{1004.4024}


\bibitem[\protect\citeauthoryear{Pacaci and \"{O}zsu}{Pacaci and
  \"{O}zsu}{2019}]%
        {Pacaci2019}
\bibfield{author}{\bibinfo{person}{Anil Pacaci} {and} \bibinfo{person}{M.~Tamer
  \"{O}zsu}.} \bibinfo{year}{2019}\natexlab{}.
\newblock \showarticletitle{Experimental Analysis of Streaming Algorithms for
  Graph Partitioning}. In \bibinfo{booktitle}{\emph{SIGMOD}}.
  \bibinfo{pages}{1375--1392}.
\newblock
\showISBNx{9781450356435}


\bibitem[\protect\citeauthoryear{Pellegrini}{Pellegrini}{2007}]%
        {Pellegrini07}
\bibfield{author}{\bibinfo{person}{F. Pellegrini}.}
  \bibinfo{year}{2007}\natexlab{}.
\newblock \showarticletitle{A Parallelisable Multi-level Banded Diffusion
  Scheme for Computing Balanced Partitions with Smooth Boundaries}. In
  \bibinfo{booktitle}{\emph{Euro-Par 2007 Parallel Processing}}.
  \bibinfo{pages}{195--204}.
\newblock


\bibitem[\protect\citeauthoryear{Raghavan, Albert, and Kumara}{Raghavan
  et~al\mbox{.}}{2007}]%
        {Raghavan2007}
\bibfield{author}{\bibinfo{person}{U. Raghavan}, \bibinfo{person}{R. Albert},
  {and} \bibinfo{person}{S. Kumara}.} \bibinfo{year}{2007}\natexlab{}.
\newblock \showarticletitle{Near linear time algorithm to detect community
  structures in large-scale networks}.
\newblock \bibinfo{journal}{\emph{Physical Review E}} (\bibinfo{year}{2007}),
  11.
\newblock


\bibitem[\protect\citeauthoryear{Regan and Barabasi}{Regan and
  Barabasi}{2003}]%
        {Regan2003}
\bibfield{author}{\bibinfo{person}{Erzs{\'e}bet Regan} {and}
  \bibinfo{person}{Albert-Laszlo Barabasi}.} \bibinfo{year}{2003}\natexlab{}.
\newblock \showarticletitle{Hierarchical Organization in Complex Networks}.
\newblock \bibinfo{journal}{\emph{Physical Review E}}  \bibinfo{volume}{67}
  (\bibinfo{date}{03} \bibinfo{year}{2003}), \bibinfo{pages}{026112}.
\newblock


\bibitem[\protect\citeauthoryear{Sanders and Schulz}{Sanders and
  Schulz}{2011}]%
        {sanders2011engineering}
\bibfield{author}{\bibinfo{person}{Peter Sanders} {and}
  \bibinfo{person}{Christian Schulz}.} \bibinfo{year}{2011}\natexlab{}.
\newblock \showarticletitle{Engineering multilevel graph partitioning
  algorithms}. In \bibinfo{booktitle}{\emph{European Symposium on Algorithms}}.
  Springer, \bibinfo{pages}{469--480}.
\newblock


\bibitem[\protect\citeauthoryear{Sarwat, Elnikety, He, and Kliot}{Sarwat
  et~al\mbox{.}}{2012}]%
        {Sarwat2012}
\bibfield{author}{\bibinfo{person}{Mohamed Sarwat}, \bibinfo{person}{Sameh
  Elnikety}, \bibinfo{person}{Yuxiong He}, {and} \bibinfo{person}{Gabriel
  Kliot}.} \bibinfo{year}{2012}\natexlab{}.
\newblock \showarticletitle{Horton: Online Query Execution Engine for Large
  Distributed Graphs}. In \bibinfo{booktitle}{\emph{ICDE}}.
\newblock


\bibitem[\protect\citeauthoryear{Satuluri, Parthasarathy, and Ruan}{Satuluri
  et~al\mbox{.}}{2011}]%
        {Satuluri2011}
\bibfield{author}{\bibinfo{person}{Venu Satuluri}, \bibinfo{person}{Srinivasan
  Parthasarathy}, {and} \bibinfo{person}{Yiye Ruan}.}
  \bibinfo{year}{2011}\natexlab{}.
\newblock \showarticletitle{Local Graph Sparsification for Scalable
  Clustering}. In \bibinfo{booktitle}{\emph{SIGMOD}}.
  \bibinfo{pages}{721--732}.
\newblock


\bibitem[\protect\citeauthoryear{Saveski, Pouget-Abadie, Saint-Jacques, Duan,
  Ghosh, Xu, and Airoldi}{Saveski et~al\mbox{.}}{2017}]%
        {saveski2017detecting}
\bibfield{author}{\bibinfo{person}{M. Saveski}, \bibinfo{person}{J.
  Pouget-Abadie}, \bibinfo{person}{G. Saint-Jacques}, \bibinfo{person}{W.
  Duan}, \bibinfo{person}{S. Ghosh}, \bibinfo{person}{Y. Xu}, {and}
  \bibinfo{person}{E. Airoldi}.} \bibinfo{year}{2017}\natexlab{}.
\newblock \showarticletitle{Detecting network effects: Randomizing over
  randomized experiments}. In \bibinfo{booktitle}{\emph{KDD}}.
  \bibinfo{pages}{1027--1035}.
\newblock


\bibitem[\protect\citeauthoryear{Shalita, Karrer, Kabiljo, Sharma, Presta,
  Adcock, Kllapi, and Stumm}{Shalita et~al\mbox{.}}{2016}]%
        {Shalita2016}
\bibfield{author}{\bibinfo{person}{A. Shalita}, \bibinfo{person}{B. Karrer},
  \bibinfo{person}{I. Kabiljo}, \bibinfo{person}{A. Sharma},
  \bibinfo{person}{A. Presta}, \bibinfo{person}{A. Adcock}, \bibinfo{person}{H.
  Kllapi}, {and} \bibinfo{person}{M. Stumm}.} \bibinfo{year}{2016}\natexlab{}.
\newblock \showarticletitle{Social Hash: An Assignment Framework for Optimizing
  Distributed Systems Operations on Social Networks}. In
  \bibinfo{booktitle}{\emph{USENIX NSDI}}. \bibinfo{pages}{455--468}.
\newblock


\bibitem[\protect\citeauthoryear{Shao, Wang, and Li}{Shao
  et~al\mbox{.}}{2013}]%
        {Shao2013}
\bibfield{author}{\bibinfo{person}{Bin Shao}, \bibinfo{person}{Haixun Wang},
  {and} \bibinfo{person}{Yatao Li}.} \bibinfo{year}{2013}\natexlab{}.
\newblock \showarticletitle{Trinity: A Distributed Graph Engine on a Memory
  Cloud}. In \bibinfo{booktitle}{\emph{SIGMOD}}.
\newblock


\bibitem[\protect\citeauthoryear{Stanton}{Stanton}{2014}]%
        {Stanton2014}
\bibfield{author}{\bibinfo{person}{Isabelle Stanton}.}
  \bibinfo{year}{2014}\natexlab{}.
\newblock \showarticletitle{Streaming balanced graph partitioning algorithms
  for random graphs}. In \bibinfo{booktitle}{\emph{SODA}}.
  \bibinfo{pages}{1287--1301}.
\newblock


\bibitem[\protect\citeauthoryear{Stanton and Kliot}{Stanton and Kliot}{2012}]%
        {Stanton2012}
\bibfield{author}{\bibinfo{person}{Isabelle Stanton} {and}
  \bibinfo{person}{Gabriel Kliot}.} \bibinfo{year}{2012}\natexlab{}.
\newblock \showarticletitle{Streaming Graph Partitioning for Large Distributed
  Graphs}. In \bibinfo{booktitle}{\emph{KDD}}. \bibinfo{pages}{1222--1230}.
\newblock


\bibitem[\protect\citeauthoryear{Tsourakakis, Skantsidis, Radunovic, and
  Vojnovic}{Tsourakakis et~al\mbox{.}}{2014}]%
        {Tsourakakis14}
\bibfield{author}{\bibinfo{person}{C.~E. Tsourakakis}, \bibinfo{person}{C.
  Skantsidis}, \bibinfo{person}{B. Radunovic}, {and} \bibinfo{person}{M.
  Vojnovic}.} \bibinfo{year}{2014}\natexlab{}.
\newblock \showarticletitle{FENNEL: Streaming Graph Partitioning for Massive
  Scale Graphs}. In \bibinfo{booktitle}{\emph{WSDM}}.
\newblock


\bibitem[\protect\citeauthoryear{Ugander and Backstrom}{Ugander and
  Backstrom}{2013}]%
        {UB2013}
\bibfield{author}{\bibinfo{person}{J. Ugander} {and} \bibinfo{person}{L.
  Backstrom}.} \bibinfo{year}{2013}\natexlab{}.
\newblock \showarticletitle{Balanced Label Propagation for Partitioning Massive
  Graphs}. In \bibinfo{booktitle}{\emph{WSDM}}. \bibinfo{pages}{507--516}.
\newblock


\bibitem[\protect\citeauthoryear{Ugander, Karrer, Backstrom, and
  Kleinberg}{Ugander et~al\mbox{.}}{2013}]%
        {ugander2013graph}
\bibfield{author}{\bibinfo{person}{J. Ugander}, \bibinfo{person}{B. Karrer},
  \bibinfo{person}{L. Backstrom}, {and} \bibinfo{person}{J. Kleinberg}.}
  \bibinfo{year}{2013}\natexlab{}.
\newblock \showarticletitle{Graph cluster randomization: Network exposure to
  multiple universes}. In \bibinfo{booktitle}{\emph{KDD}}.
  \bibinfo{pages}{329--337}.
\newblock


\bibitem[\protect\citeauthoryear{Ugander, Karrer, Backstrom, and
  Marlow}{Ugander et~al\mbox{.}}{2011}]%
        {Ugander2011}
\bibfield{author}{\bibinfo{person}{J. Ugander}, \bibinfo{person}{B. Karrer},
  \bibinfo{person}{L. Backstrom}, {and} \bibinfo{person}{C. Marlow}.}
  \bibinfo{year}{2011}\natexlab{}.
\newblock \showarticletitle{The Anatomy of the Facebook Social Graph}.
\newblock  (\bibinfo{year}{2011}).
\newblock
\showeprint[arxiv]{1111.4503}


\bibitem[\protect\citeauthoryear{Vigna}{Vigna}{2015}]%
        {Vigna2014}
\bibfield{author}{\bibinfo{person}{S. Vigna}.} \bibinfo{year}{2015}\natexlab{}.
\newblock \showarticletitle{A weighted correlation index for rankings with
  ties}. In \bibinfo{booktitle}{\emph{WWW}}.
\newblock


\bibitem[\protect\citeauthoryear{Voorhees}{Voorhees}{2002}]%
        {Voorhees2002}
\bibfield{author}{\bibinfo{person}{E. Voorhees}.}
  \bibinfo{year}{2002}\natexlab{}.
\newblock \showarticletitle{Evaluation by Highly Relevant Documents}. In
  \bibinfo{booktitle}{\emph{SIGIR Forum}}.
\newblock


\bibitem[\protect\citeauthoryear{Watts and Strogatz}{Watts and
  Strogatz}{1998}]%
        {Watts1998}
\bibfield{author}{\bibinfo{person}{Duncan~J. Watts} {and}
  \bibinfo{person}{Steven~H. Strogatz}.} \bibinfo{year}{1998}\natexlab{}.
\newblock \showarticletitle{Collective dynamics of `small-world'networks}.
\newblock \bibinfo{journal}{\emph{Nature}} \bibinfo{volume}{393},
  \bibinfo{number}{6684} (\bibinfo{year}{1998}), \bibinfo{pages}{440--442}.
\newblock
\showISBNx{1476-4687}


\bibitem[\protect\citeauthoryear{Zhu and Ghahramani}{Zhu and
  Ghahramani}{2002}]%
        {Zhu2002}
\bibfield{author}{\bibinfo{person}{Xiaojin Zhu} {and} \bibinfo{person}{Zoubin
  Ghahramani}.} \bibinfo{year}{2002}\natexlab{}.
\newblock \bibinfo{booktitle}{\emph{Learning from Labeled and Unlabeled Data
  with Label Propagation}}.
\newblock \bibinfo{type}{{T}echnical {R}eport}. \bibinfo{institution}{CMU
  CALD}.
\newblock


\end{thebibliography}

\pagebreak
\clearpage
\appendix
\section{Computational considerations}
\subsection{Base methods}
\label{basemethods}
\xhdr{Complexity} 
For $k$ clusters and a graph on $n$ nodes,
both BLP and \KLSHP compute node gains and targets in $O(nk)$ operations. They then sort $k(k-1)$ queues in $O(n/k^2 \log (n/k^2))$ time\footnote{Recall that the production implementation of SHP does not fully sort each move queue to alleviate this additional complexity. See~\cite{Kabiljo2017}.}. BLP additionally solves an LP with $O(k^2)$ variables and $O(g^* k^2)$ constraints, where $g^*$ is the number of unique gain levels $g_u$. For large graphs with large degrees, $g^*$ can be quite large; one example in~\cite{UB2013} solves an LP with $12,000$ variables and $600,000$ constraints. A major achievement of SHP, it can be said, was to come up with a effective LP-less variation on BLP.  Both methods conduct node relocations in $O(n)$. 

Meanwhile, the runtime of reLDG in its proposed form is simply $O(nk)$---serially accessing each node, and finding the shard which maximizes the objective for each one---making it the most lightweight algorithm in terms of computational complexity of those discussed in this work. Alternative choices of stream order may incur additional preprocessing costs, as is discussed in Section~\ref{streamcomplexity}. 

\xhdr{Parallelization}
Though reLDG is the algorithm with the lowest serial time-complexity, BLP and SHP are both easily parallelizable, whereas the streaming algorithm is more difficult to distribute, by design. Between BLP and SHP, computing gains and sorting the move queues between each shard pair are completely independent operations per node and shard pair. All node relocations can be done in a distributed manner as well, once the LP is solved in the case of BLP. ReLDG has a parallel implementation which incurs a performance penalty that can be mitigated with more iterations \cite{NU2013}. For the sake of pure algorithmic comparison in this work, we chose to not consider parallelized implementations in our analyses.

\subsection{Stream orders}
\label{streamcomplexity}
\xhdr{Complexity}
The prioritized static and dynamic stream orders proposed in this work require sorts of the entire node set, taking $O(n\log n)$ time to sort after computing the per-node quantities of interest. Of these calculations, we compute the local clustering coefficients in $O(nd_{\max}^2)$ operations, ambivalence and gain in $O(nk)$, BFS order takes $O(n+m)$, and degree takes $O(m)$. The calculations of our static orderings are one-time up-front computations, easily stored for future use; the dynamic orderings compute their respective quantities and sort at every iteration, taking $O(n(k + \log n))$ time at each step.

\section{Expected ambivalence and degree}
\label{bounds}
To begin, note that the strong correlations in Figure~\ref{fig:kendall} are between \textit{increasing} ambivalence and \textit{decreasing} degree order. As such, we will consider negated ambivalence in this section for simplicity. Further, we adjust the definition of ambivalence in Eq.~\eqref{ambiv} from the max over absolute differences to the max over squared differences,
\begin{equation}
a_u = \max_{i\in [k] \setminus P(u)} \left(|N(u) \cap V_i| - |N(u) \cap V_{P(u)}|\right)^2.
\label{adjambiv}
\end{equation}
The adjustment bears no effect on the order of the ambivalence scores and aids the following analysis.

\begin{proposition} The expected value of the initial ambivalence in Eq.~\eqref{adjambiv} is lower and upper bounded by
\begin{align*}
 \frac{2}{k} d_u \le \E\left[a_u\right] \le \frac{2(k-1)}{k} \cdot d_u,
\end{align*}
where $k$ is the number of shards and $d_u$ is the degree of node $u$.
\end{proposition}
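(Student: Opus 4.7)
The plan is to treat the initial partition as a uniformly random balanced assignment and reduce both bounds to a single second-moment calculation for a pair of shards. Let $X_i := |N(u) \cap V_i|$ for $i \in [k]$ and let $j := P(u)$. Under a uniform random balanced initialization, each of the $d_u$ neighbors of $u$ (and $u$ itself) is approximately uniformly distributed over the $k$ shards, so that $(X_1,\dots,X_k)$ is multinomial with parameters $(d_u; 1/k,\dots,1/k)$. (Any $O(1/n)$ correction from the balance constraint is negligible in expectation for the bounds we seek, and can be absorbed into the same arithmetic.) The ambivalence in the squared form of Eq.~\eqref{adjambiv} becomes $a_u = \max_{i \ne j}(X_i - X_j)^2$.

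The key computation is the second moment of a single difference. For any fixed pair $i \ne j$, symmetry gives $\mathbb{E}[X_i - X_j] = 0$, so $\mathbb{E}[(X_i - X_j)^2] = \operatorname{Var}(X_i - X_j)$. Using the standard multinomial moments $\operatorname{Var}(X_i) = d_u(k-1)/k^2$ and $\operatorname{Cov}(X_i, X_j) = -d_u/k^2$, a short calculation yields
\[
\mathbb{E}\bigl[(X_i - X_j)^2\bigr] \;=\; \frac{2 d_u}{k}.
\]
This is the only moment identity needed.

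With this in hand, the lower bound is immediate from $\max \ge$ any single term: choose any $i \ne j$ and write $\mathbb{E}[a_u] \ge \mathbb{E}[(X_i - X_j)^2] = 2 d_u / k$. For the upper bound, bound the maximum by the sum over the $k-1$ external shards, $a_u \le \sum_{i \ne j}(X_i - X_j)^2$, and apply linearity of expectation together with the identity above to get $\mathbb{E}[a_u] \le (k-1) \cdot 2 d_u / k = 2(k-1)d_u/k$.

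The only genuine subtlety will be justifying the multinomial approximation under the hard balance constraint; I would handle this either by (i) taking the initial assignment to be i.i.d.\ uniform conditioned on balance and noting the correction is $o(1)$ as $n\to\infty$ (the regime of interest), or (ii) observing that the correlations induced by exact balance are weakly negative, which only sharpens the upper bound ($\operatorname{Var}(X_i - X_j)$ can only decrease) while leaving the symmetry-based lower bound intact up to the same correction. Everything else is routine.
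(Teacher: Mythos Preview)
Your proof is correct and follows the same strategy as the paper: bound the maximum below by a single term and above by the sum over the $k-1$ external shards, with both reductions hinging on the identity $\E[(X_i-X_j)^2]=2d_u/k$. The only difference is presentational---the paper derives that identity via a matrix/trace computation (writing $a_u=\max_i x_i^\top M_u x_i$ and evaluating $\text{tr}(M_u\Sigma_i)$ with $\Sigma_i=\tfrac{2}{k}I$) rather than your direct multinomial variance calculation, and it silently treats the random initialization as i.i.d.\ uniform where you flag the balance-constraint correction explicitly.
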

\begin{proof}
Let $Y \in \{0,1\}^{n\times k}$ be the matrix of node assignments under partition $P$, and $A\in \{0,1\}^{n\times n}$ denote the adjacency matrix of graph $G$.
We denote shard $i$'s column of $Y$ by $Y_i$, and node $u$'s column of $A$ by $A_u$. 

Note that $|N(u) \cap V_i| = A_u^\top Y_i$, so ambivalence can be written as:
\begin{align*}
a_u &= \max_{i\in [k] \setminus P(u)} ((Y_i - Y_{P(u)})^\top A_u)^2\\
&= \max_{i\in [k] \setminus P(u)} (Y_i - Y_{P(u)})^\top A_uA_u^\top(Y_i - Y_{P(u)})\\
&= \max_{i\in [k] \setminus P(u)} x_i^\top M_u x_i,
\end{align*}
where we define $x_i = Y_i - Y_{P(u)}$, and $M_u=A_uA_u^\top$. 

\xhdr{Lower bound}
Computing the expected value, we have
\begin{align*}
\E\left[a_u\right]&=\E\left[\max_{i\in [k] \setminus P(u)} x_i^\top M_u x_i\right]\\
&\geq \max_{i\in[k]\setminus P(u)} \E\left[x_i^\top M_u x_i\right]\\
&= \max_{i\in[k]\setminus P(u)} \text{tr}(M_u\Sigma_i) + \mu_i^\top M_u \mu_i,
\end{align*}
where $\mu_i = \E[x_i] = \E[Y_i -Y_{P(u)}]$ and $\Sigma_i = \text{Cov}(x_i)$. Computing $\mu_i$ under a random partition,
\begin{align*}
\mu_i &= \E[Y_i -Y_{P(u)}] \\
&= \E[Y_i] -\E[Y_{P(u)}] \\
&=\frac{1}{k} \mathbbm{1} - \frac{1}{k} \mathbbm{1} \\
&= 0.
\end{align*}
So we have
\begin{align*}
\E\left[a_u\right]\geq\max_{i\in[k]\setminus P(u)} \text{tr}(M_u\Sigma_i).
\end{align*}
Expanding the covariance matrix $\Sigma_i$,
\begin{align*}
\Sigma_i &= \E\left[(Y_i -Y_{P(u)})(Y_i -Y_{P(u)})^\top\right] - \mu_i\mu_i^\top\\
&= \E\left[(Y_i -Y_{P(u)})(Y_i -Y_{P(u)})^\top\right].
\end{align*}
We define the random quantity $B^{(i)} = (Y_i -Y_{P(u)})(Y_i -Y_{P(u)})^\top$. Analyzing the quantities on the diagonal and off-diagonals, respectively,
\begin{equation*}
\begin{aligned}[t]
B^{(i)}_{uu} &= (Y_{ui} - Y_{uP(u)})^2\\
&= \begin{cases}
1 & \text{if } u \in V_{i} \cup V_{P(u)},\\
0 & \text{otherwise}.
\end{cases}\\
B^{(i)}_{uv} &= (Y_{ui} - Y_{uP(u)})(Y_{vi} - Y_{vP(u)})\\
&= \begin{cases}
1 & \text{if } u,v \in V_{i} \text{ or } u,v \in V_{P(u)},\\
-1 & \text{if } u \in V_{i}, v\in V_{P(u)} \text{ or } u \in V_{P(u)}, v\in V_{i},\\
0 & \text{otherwise}.
\end{cases}
\end{aligned}
\end{equation*}
Under the initial random assignment, the probability of 1 on the diagonal is $\frac{2}{k}$, for all $u\in V$. On the off-diagonal, the probability of a value being either -1 or 1 is $\frac{2}{k^2}$. Hence, 
\[
\Sigma_i = \E[B^{(i)}] = \frac{2}{k} I, \ \forall i,
\]
where $I$ is the identity matrix. Therefore,
\begin{align*}
\E[a_u] &\geq\max_{i\in[k]\setminus P(u)} \text{tr}(M_u\Sigma_i)\\
&= \frac{2}{k}\text{tr}(M_u)\\
&=\frac{2}{k} \cdot d_u,
\end{align*}
where $d_u$ is the degree of node $u$.  

\xhdr{Upper bound}
Borrowing the same notation,
\begin{align*}
\E\left[a_u\right]&=\E\left[\max_{i\in [k] \setminus P(u)} x_i^\top M_u x_i\right]\\
&\leq \E\left[\sum_{i\in [k] \setminus P(u)} x_i^\top M_u x_i\right]\\
&= \sum_{i\in [k] \setminus P(u)} \E\left[x_i^\top M_u x_i\right]\\
&= \sum_{i\in [k] \setminus P(u)} \text{tr}(M_u\Sigma_i) + \mu_i^\top M_u \mu_i\\
&= \sum_{i\in [k] \setminus P(u)} \frac{2}{k}\text{tr}(M_u)\\
&= \frac{2(k-1)}{k}\cdot d_u.
\end{align*}
\end{proof}

\end{document}